\documentclass[conference]{IEEEtran}


\usepackage[utf8]{inputenc}
\usepackage{amsmath,amssymb,bm}
\usepackage[dvips]{graphicx}
\usepackage{array}
\usepackage{cite}
\usepackage{enumerate}
\usepackage{color}
\usepackage{float}
\usepackage[capitalise]{cleveref}
\usepackage[mathcal]{euscript}
\usepackage{multirow,multicol,tabu}
\setlength{\columnsep}{1cm}
\usepackage{subcaption}
\usepackage{soul}

\newcolumntype{?}{!{\vrule width 1pt}}

\usepackage{mathtools}	

\usepackage{algorithm}
\usepackage{algpseudocode}

\newcolumntype{M}[1]{>{\centering\arraybackslash}m{#1}}
\newcolumntype{N}{@{}m{0pt}@{}}

\usepackage{etoolbox}

\makeatletter
\newcommand{\changeoperator}[1]{%
  \csletcs{#1@saved}{#1@}%
  \csdef{#1@}{\changed@operator{#1}}%
}
\newcommand{\changed@operator}[1]{%
  \mathop{%
    \mathchoice{\textstyle\csuse{#1@saved}}
               {\csuse{#1@saved}}
               {\csuse{#1@saved}}
               {\csuse{#1@saved}}%
  }%
}
\makeatother

\changeoperator{sum}
\changeoperator{prod}

\usepackage{tikz,tabularx}
\usetikzlibrary{decorations.pathreplacing}
\tikzstyle{every picture}+=[remember picture]

\usetikzlibrary{shapes,arrows,shadows}	
\usepgflibrary{patterns}
\usetikzlibrary{patterns}
\usetikzlibrary{shapes.geometric}
\usetikzlibrary{arrows}

\usetikzlibrary{decorations.markings}

\tikzset{myptr1/.style={decoration={markings,mark=at position 1 with %
    {\arrow[scale=2.5]{>}}},postaction={decorate}}}
    
\tikzset{myptr2/.style={decoration={markings,mark=at position 1 with %
    {\arrow[scale=1.8]{>}}},postaction={decorate}}}
    
\usetikzlibrary{decorations.markings}

\usepackage{multicol}
\setlength{\columnsep}{1cm}


\usepackage{amsthm}

\newtheorem{remark}{Remark}

\newtheorem{corollary}{Corollary}
\newtheorem{proposition}{Proposition}

\newcommand{\E}{\mathbb{E}}

\usepackage{subcaption}



\title{Download time analysis for distributed storage systems with node failures}

\author{
  	\IEEEauthorblockN{Tim Hellemans}
 	\IEEEauthorblockA{University of Antwerp, Belgium}
 	\IEEEauthorblockA{Tim.Hellemans@uantwerpen.be}		
 	\and
 	\IEEEauthorblockN{Arti Yardi}
 	\IEEEauthorblockA{IIIT Bangalore}
 	\IEEEauthorblockA{arti.yardi@gmail.com}		
 	\and
 	\IEEEauthorblockN{Tejas Bodas}
 	\IEEEauthorblockA{IIT Dharwad}
 	\IEEEauthorblockA{tejaspbodas@gmail.com}		
}

%
\begin{document}

\maketitle

\begin{abstract}
We consider a distributed storage system which stores several hot (popular) and cold (less popular) data files across multiple nodes or servers. Hot files are stored using repetition codes while cold files are stored using erasure codes. The nodes are prone to failure and hence at any given time, we assume that only a fraction of the nodes are available. Using a cavity process based mean field framework, we analyze the download time for users accessing hot or cold data in the presence of failed nodes. Our work also illustrates the impact of the choice of the storage code on the download time performance of users in the system. 
\end{abstract}

%
\section{Introduction}
\label{Section_Introduction}
The use of distributed storage systems (DSS) has become widespread due to the large amount of data that is being generated everyday. The idea behind DSS is to store data across multiple data centers in a distributed manner. To ensure that data is not lost due to node failures, coding theoretic techniques are used to introduce redundancy across stored data \cite{Dimakis11,Balaji18}. In its simplest form, replication codes are very effective in preventing data loss since multiple copies of the same file are stored across different nodes. Due to its simplicity, replication codes are used to store hot data, i.e., data or content that is very popular \cite{Ghemawat03,Shvachko10,Aktas20}. Recent techniques for data storage use erasure codes such as maximum distance separable (MDS) codes \cite{Dimakis10}, regenerating (RG) codes \cite{Dimakis10}, minimum storage regenerating (MSR) codes, minimum bandwidth regeneration (MBR) codes \cite{Rashmi09, Rashmi11}, and locally recoverable (LR) codes \cite{pp14}. In such codes, a file of size $B$ is divided into $k$ equal parts which are then encoded into $d \geq k$ parts each of size {$\theta$ with $\theta \geq \frac{B}{k}$ (the total storage needed is $ d \theta  > B$}). The original file can be reconstructed after downloading any $k$ out of these $d$ parts. Compared to repetition codes, erasure codes are effective in introducing redundancy and are robust to the problem of node failures \cite{Huang2012}. However, reconstructing a coded file from its fragments involves overheads and hence erasure codes are only used to store less popular content. 

Node failure is a major issue plaguing DSS. When a node fails, a repair is initiated to recover the lost data and this recovery traffic is non-negligible and can even impact the download time for hot and cold data. Therefore node recovery is only performed periodically at a time when the network activity is at its minimum. This leaves a lot of nodes unavailable for use during peak activity and this impacts the download time for file request. A clear understanding of this impact is missing in the literature. When it comes to analyzing the download time, an important open problem is the exact download time analysis for erasure coded data. While the literature predominantly offers bounds on the mean download time or on the tail latency, the analysis is also under simplifying assumptions such as unique file to download, exponential  service times, absence of node failures etc. For example, in one of the earliest work on MDS codes, Shah et al \cite{Shah16} provide bounds on the mean download time for a single erasure coded file stored across all its servers. \cite{Joshi12,Joshi15} consider the trade-off between storage space and download time of a DSS and show that the download time can be reduced by using erasure codes over replication. \cite{Li18} demonstrates the latency gains with respect to using erasure coding over replication in DSS.  \cite{Aggarwal17,Al2019} provide bounds on the tail latency of a distributed storage system with multiple files under probabilistic scheduling of jobs and general service times. \cite{Badita19} analyzes the MDS queue and offers yet another bound on the mean download time for erasure codes.
The download time analysis for repetition codes is relatively easy and closed form expressions for the mean response time have been obtained recently in \cite{Ayesta18,Ayesta19,Ayesta19b} for exponential service times. For general service times, a mean field analysis is presented in \cite{Hellemans18}. 

In this work we consider the download time analysis in a DSS with $N$ nodes that store several hot and cold files and where only a fraction $q$ of the nodes are available.
Nodes follow a first in first out (FIFO) scheduling policy to serve requests and the service time variable has a general distribution. Files of type~$i$ are downloaded using the least loaded $k_i$ out of $d_i$ policy (LL($d_i,k_i$)) policy where $i \in \{h,c\}$ for hot and cold files. In this policy, each type $i$ component is requested from $d_i$ appropriate  servers that store it. The first $k_i$ file components that start being downloaded are retained and the rest are canceled.
For hot files, $i=h$ and $k_h = 1$. 
{The novelty of this paper is in its ability to model a DSS with node failures, that stores several hot and cold files and where the random service requirement follows a shifted phase-type distribution. To accommodate all these features, we model the system as a workload dependent load balancing policy and employ the cavity process based mean field method for its analysis. We believe that this method is a very powerful tool that is able to model and analyze such intricate system details accurately.
To the best of our knowledge, no recent or previous work has been able to analyze the download time for erasure coded data with all these features at once.
}

 The use of mean field approximation for analyzing load balancing policies is quite popular \cite{Mitzenmacher01,Bramson10,Mukherjee16,Hellemans18,Jinan20}. Our download time approximation is based on mean field asymptotics ($N \rightarrow \infty$) for the workload distribution at an arbitrary queue. Here, the workload at a node represents the actual pending requests that will be served at the node (ignoring the pending requests at the node that will be canceled due to the $LL(d,k)$ policy employed). 
 Based on the cavity queue approach used in \cite{Bramson10,Bramson11,Bramson12,Hellemans18,Hellemans19}, and the asymptotic independence of the node workloads under the $LL(d,k)$ policy \cite{Shneer20}, we obtain an integro-differential equation (IDE) characterizing the workload distribution at an arbitrary queue (the cavity queue). This enables us to obtain the download time distribution for hot and cold file requests.
 
 \subsubsection*{Organization} In Section \ref{sec:SystemModel} we describe the system model along with some preliminaries. Our main results on the workload and download time distribution are presented in Section~\ref{Section_results}. Numerical results are presented in Section~\ref{Section_simulations} followed by a conclusion in Section~\ref{Section_Conclusion}. 
 
%
%
%
%
%

%
\section{System Model and Preliminaries }
\label{sec:SystemModel}

In this work, we perceive a DSS as a load balancing system with $N$ data storage servers. Requests for file downloads arrive at a dispatcher which forwards the requests to appropriate nodes that store the file. Each server has a FIFO queue for placing the file requests and hence the download time of a file not only depends on its file size (or service time) but  also on other file requests that have arrived before it. We assume that file requests arrive according to a Poisson process with rate $\lambda N$. Further the requested file is of type $i$ with probability $p_i$ for $i \in \{h,c\}$. 
For ease of exposition, we assume that each file (irrespective of its type) has size $B$ and each server has the same storage capacity.
Each type $i$ file is stored across $d_i$ distinct servers using a $(d_i,k_i)-$ erasure code $C_i$ of dimension $k_i$ for $i \in \{h,c\}$. 
In such a coding technique, a file is encoded into $d_i$ \textit{fragments} (to be stored across $d_i$ servers) and any $k_i$ out of the $d_i$ fragments are sufficient to reconstruct the file. We assume that each fragment belongs to some finite field $\mathbb{F}_{\hat{q}}$ and the size of each coded fragment for a type $i$ file is $\theta_i$. When the underlying code $C_c$ is an MDS code,
we have $\theta_i  = \frac{B}{k_i}$.  When the underlying code $C_c$ is an RG code, each node stores $\theta_i \geq \frac{B}{k_i}$ symbols from $\mathbb{F}_{\hat{q}}$. An erasure coded file of size $B$ therefore requires a storage space of  $d_i \theta_i$ and can be reconstructed after downloading $k_i \theta_i \geq B$ units.
We use repetition code $C_h$ for hot files and hence $k_h = 1$ and $\theta_h = B$.

{A requirement of the cavity process method is that different servers must be sampled at an identical rate and hence should have identical load. If this is not the case, we would have to use multiple cavity queues to represent differently loaded servers, thus complicating the analysis. To alleviate these difficulties, we make the following assumptions.}
While storing a type $i$ file (this happens only once and at the time of setting up the DSS), we assume that the corresponding $d_i$ servers are chosen uniformly at random, $i \in \{h,c\}$. We further assume that two files of the same type are not stored on the same set of $d_i$ servers. For hot files, this ensures that at most one hot file is lost from any $d_i$ node failures. Under such storage restrictions, our DSS can store ${N \choose d_i}$ distinct type $i$ files across $N$ nodes. This allows us to map a set of $d_i$ servers to a unique type $i$ file. We assume that files have identical popularity and hence each file is requested uniformly at random. With these assumptions, we can now view the DSS as a power-of-$d_i$ choice load balancing system where an arriving request of type $i$ samples $d_i$ servers uniformly at random \cite{Ayesta18, Hellemans19}. See \cite{Shah17} for a similar file placement policy that ensure symmetry across storage nodes.


 We assume that a type $i$ file is downloaded according to the $LL(d_i,k_i)$ policy (LL stands for Least Loaded).  When a request for a type $i$ file arrives, the dispatcher copies this request on the corresponding $d_i$ servers. For each file, the first $k_i$ out of $d_i$ copies to reach the head of their respective queue are retained while the remaining $d_i - k_i$ copies are deleted. The $k_i$ components are downloaded from the least loaded $k_i$ out of $d_i$ servers and are sufficient to reconstruct the file. Here the load (or workload) at a server as seen by an arriving request is the pending work at the server that will be served before this  request starts receiving any service. See \cite{Hellemans19} for more details of the policy.

 For a type $i$ file, we define the \textit{fragment service time} $\hat{X}_i$ associated with a file fragment as the time taken to serve the particular fragment request once it is picked by a server for service (actual download). We assume that $\hat{X}_i = \delta_i + Y_i$ where $\delta_i$ is a non-negative constant startup time taken by the server while serving the fragment. $Y_i$ is a PH($\alpha_i, A_i$) (phase-type) random variable where $\alpha_i \in \mathbb{R}^n$ and matrix $A_i \in \mathbb{R}^{n \times n}$ for $n \in \mathbb{N}$. Defining $\mu_i = -A_i \textbf{1}$, the probability density function (pdf) $f_{Y_i}(w) = \alpha_i e^{A_iw}\mu_i$.  We denote the cumulative distribution function (CDF) of $\hat{X}_i$ by $G_i(\cdot)$, its complimentary CDF (CCDF) by $\bar{G}_i(\cdot)$, its pdf by $g_i(\cdot)$ and its mean by $E[\hat{X}_i]$.
 {Note that we have modeled $\hat{X}_i$ by a shifted-phase type distribution. The shifted exponential distribution that is used to model jobs in distributed computing \cite{Peng20} is a special case of this distribution. We use phase type distribution because they are known to be dense in the class of all distributions and hence any distribution may be approximated by it.} 
We also assume that for a file request, the corresponding $\hat{X}_i$ for each of the $k_i$ fragments  is identical. We denote the \textit{file service time} for a type $i$ file by $X_i$ where due to identical fragment size assumption, we have $X_i = k_i \hat{X}_i$ with mean $E[X_i]$. A file service time models the random time taken to retrieve $k_i \theta_i$ amount of data directly from a single server. 
 We use the term \textit{fragment download time} or \textit{response time} to indicate the total time spent by a fragment in the system (sum of the  fetching time and its waiting time).  The \textit{download time} or \textit{response time} $R_i$ for a type $i$ file then denotes the time taken by the file request since its arrival to finish downloading all the $k_i$ components. As part of our main result, we characterize the CCDF of $R_i$ using the mean field approach.  

\subsubsection*{The cavity process method}
This is a mean field approximation method used to obtain the workload distribution $F(\cdot)$ at an arbitrary queue of our DSS when $N$ goes to infinity. In order to employ this methodology, the asymptotic independence of the workload at different queues must be established. Fortunately for load balancing policies that are combinations of $LL(d_j,k_j)$ (which form the basis of our paper), this was proven by Shneer and Stoylar \cite{Shneer20}. The asymptotic independence ensures that the workload at the different queues are independent and governed by the same law $F(\cdot).$ We will denote the workload density by $f(\cdot)$ and its CCDF by $\bar{F}(\cdot)$. The main task is to characterize  $\bar{F}(\cdot)$ for an arbitrary queue in our DSS. Following an approach similar to that in \cite{Bramson11,Hellemans18,Hellemans19} we consider a tagged queue (or a cavity queue) and analyze it as an $M/G/1$ queue with workload CCDF $\bar{F}(\cdot)$ (this is yet to be identified) and workload dependent arrival rate $\lambda(w)$ (this can be determined for the $LL(d_i,k_i)$ policy). Without divulging too much details (c.f. \cite{Bramson10,Hellemans18}), the high level idea involved in obtaining $\bar{F}(\cdot)$ comprises the following two steps. In the first step, assuming that $\bar{F}(\cdot)$ is given, one uses the properties of $LL(d_i,k_i)$  to obtain the arrival rate $\lambda(w)$ to a cavity queue with workload $w$. We denote this step by the expression $\lambda(w) = H(\bar{F}(w))$. As part of the second step, for a given workload dependent arrival rate $\lambda(w)$ to the cavity queue, we use properties of a standard M/G/1 queue to obtain $\bar{F}(w)$ for $w \geq 0$. This step is denoted as $\bar{F}(w) = \Phi(\lambda(w))$. The two steps are combined to obtain a functional fixed point equation for the stationary workload distribution $\Phi (H(\bar{F}(w))) = \bar{F}(w)$.  See \cite{Hellemans19} for details on obtaining the  functional differential equation for $\bar{F}(\cdot)$ in a system with a vanilla $LL(d,k)$ policy. At a high level, our DSS model with node failures can be seen as a combination of the LL($d,k$) policy for different choices of $d$ and $k$. However the application of the cavity process method to obtain $\bar{F}(\cdot)$ for our DSS is straightforward (see below). 
%
\section{Main Results}
\label{Section_results}
In order to analyze the impact of node failures, we assume that a randomly sampled node is available (not failed) with probability $q$. Failed nodes affect the data availability which in turn affects the download time. For example, for a cold file request, it may happen that less than $k_c$ out of the $d_c$ nodes storing the required fragments are available. This not only leads to a loss of the file request but also wastes server resources due to download of available fragments (which are not sufficient to reconstruct the file). We first define the quantity $B_j(d_i)$ as the probability that exactly $j$ out of $d_i$ selected servers are working (and $d_i-j$ servers have failed).
It is easy to see that the probability that an arriving type $i$ request is lost, denoted by $p_{l,i}$ is given by:
\begin{equation} \label{eq:loss_prob}
p_{l,i} = \sum_{j=0}^{k_i-1} B_j(d_i)
\end{equation}
where $i \in \{h,c\}$ and $B_j(d_i) = \binom{d_i}{j} q^j \cdot (1-q)^{d_i-j}$. Note that $B_j(d_i)$ plays a role in characterizing the load at a server. For our system, we find that the {system load} (defined as the proportion of servers that are busy) is given by:
\begin{eqnarray}\label{eq:load_model1}
\rho &=&  \lambda p_h \E[\hat{X}_h] (1- B_0(d_h)) \nonumber \\ 
&+& \lambda p_c \E[\hat{X}_c] \cdot \left(\sum_{i \geq 1} \frac{\min\{i, k_c\}}{k_c} B_i(d_c) \right). 
\end{eqnarray}
The right hand side can be interpreted as the amount of incoming work per unit time. As servers work at unit rate  when they have some work, the conservation of work  implies that, at equilibrium, the probability that a server is busy must equal the amount of incoming work per unit time.

\subsubsection*{Workload distribution}
To obtain an Integro Differential Equation (IDE) in terms of $\bar{F}(\cdot)$, we distinguish between potential and actual arrivals at the cavity queue. We say a potential arrival of type $i$ occurs at the cavity queue whenever it is selected by a file request for service. As arrivals of type $i$ occur with a rate $p_i \lambda N$, and any queue has a probability of $\frac{d_i}{N}$ of being selected, the potential arrival rate of type $i$ files is $p_i \lambda d_i$.
The main idea of the following proof is to note that the LL($d, k$) policy in the presence of failed servers reduces to a convex combination of LL($j, \min\{k, j\}$) type policies. Indeed, when only $j$ of the $d$ selected servers are available (say $j \geq k$), we try to finish the $k$ fragments on these $j$ servers.

\begin{proposition} \label{prop:model1}
The equilibrium workload distribution for our DSS model satisfies the following IDE:
\begin{align*}
\footnotesize
\bar F'(w) &= -\sum_{i \in \{h,c\}} \lambda p_i \sum_{j=1}^{d_i} B_j(d_i) \left[ \bar G_i(w) + H_{j, \min{(j,k_i)}}(w) \right. \\ &- \left. \int_0^w H_{j,\min{(j,k_i)}}(u) g_i(w-u) \, du \right]
\normalsize
\end{align*}
with $\bar F(0) = \rho$ (as in \eqref{eq:load_model1}) and:
\begin{equation}\label{eq:HdK}
H_{j,k}(w) = \sum_{i=1}^j \min\{i, k\} \binom{j}{i} \bar F(w)^{j-i} (1-\bar F(w))^i - 1.
\end{equation}
\end{proposition}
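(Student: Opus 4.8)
\emph{Proof strategy.} The plan is to characterize $\bar F$ by the two-step cavity recipe $\bar F=\Phi(H(\bar F))$ described above, realized concretely as a level-crossing (rate-balance) identity for the stationary workload of the tagged queue. By the asymptotic independence of node workloads under combinations of $\mathrm{LL}(d,k)$ policies \cite{Shneer20}, I may treat the $d_i$ servers sampled by a request as carrying i.i.d.\ workloads with common law $\bar F$, and I would treat availability as an independent thinning: each sampled server is working with probability $q$, independently of its workload. Conditioned on a type-$i$ request, exactly $j$ of its $d_i$ sampled servers are then working with probability $B_j(d_i)$, and on that event the policy acts as $\mathrm{LL}(j,\min\{j,k_i\})$ on $j$ servers whose workloads are i.i.d.\ $\sim\bar F$. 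This is the announced reduction of $\mathrm{LL}(d_i,k_i)$-with-failures to a $B_j(d_i)$-mixture of $\mathrm{LL}(j,\min\{j,k_i\})$ policies, and it is what lets me assemble the right-hand side as $\sum_i \lambda p_i \sum_{j=1}^{d_i} B_j(d_i)[\cdots]$.

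Next I would compute, for a single type-$i$ request that finds $j$ working servers, the expected number of \emph{up-crossings} of a fixed level $w$ it generates (an up-crossing occurs when a retained fragment lands on a server with workload $u\le w$ and the fragment service time exceeds $w-u$). The key observation is that the expected number of retained fragments placed at or below workload $u$ under $\mathrm{LL}(j,\min\{j,k_i\})$ equals $\E[\min\{L,k_i\}]$ with $L\sim\mathrm{Bin}(j,1-\bar F(u))$, which by \eqref{eq:HdK} is exactly $H_{j,\min\{j,k_i\}}(u)+1$ (a server at $u$ is retained precisely when fewer than $\min\{j,k_i\}$ of the others lie below it). Differentiating in $u$ turns this count into the placement density, so the expected up-crossings become $\bigl(H_{j,\min\{j,k_i\}}(0)+1\bigr)\bar G_i(w)+\int_0^w \tfrac{d}{du}H_{j,\min\{j,k_i\}}(u)\,\bar G_i(w-u)\,du$, where the first term is the contribution of the atom of $\bar F$ at $0$ (idle servers, mass $1-\rho$). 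Integrating by parts with $\bar G_i(0)=1$ and $\tfrac{d}{du}\bar G_i(w-u)=g_i(w-u)$ collapses the boundary data: the two $H_{j,\min\{j,k_i\}}(0)$ contributions cancel and leave exactly the bracket $\bar G_i(w)+H_{j,\min\{j,k_i\}}(w)-\int_0^w H_{j,\min\{j,k_i\}}(u)\,g_i(w-u)\,du$. In particular, the standalone $\bar G_i(w)$ term is precisely the unit offset between the count $\E[\min\{L,k_i\}]$ and $H_{j,\min\{j,k_i\}}$ in \eqref{eq:HdK}, i.e.\ what survives the atom at $0$ after integration by parts.

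To close, I would balance crossings. Since every server drains at unit rate while busy and the atom at $0$ never crosses a level $w>0$, the downward crossing rate of level $w$, summed over all $N$ servers, equals $-N\bar F'(w)$. Type-$i$ requests arrive at rate $p_i\lambda N$ and each generates $\sum_{j=1}^{d_i}B_j(d_i)\times(\text{bracket above})$ expected up-crossings; equating total downward and upward rates and dividing by $N$ yields exactly the stated IDE. Finally, the boundary value $\bar F(0)=\rho$ is the probability a server is busy, which follows from the work-conservation identity already recorded below \eqref{eq:load_model1}. Throughout, the shifted phase-type assumption enters only through $\bar G_i$ and $g_i$ and so does not affect the structure.

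\emph{Main obstacle.} I expect the delicate point to be twofold. First, justifying that availability acts as a workload-independent thinning, so that the $j$ working sampled servers genuinely carry i.i.d.\ $\bar F$ workloads: this is what makes the mixture weights $B_j(d_i)$ (rather than size-biased weights) correct and removes any residual factor of $q$, and it must be made consistent with $\rho$ in \eqref{eq:load_model1} being the busy fraction over \emph{all} $N$ servers. Second, the exact identification of $\E[\min\{L,k_i\}]$ with $H_{j,\min\{j,k_i\}}+1$ together with the integration-by-parts step, where the clean cancellation of the $H_{j,\min\{j,k_i\}}(0)$ boundary terms—hence the correct handling of the atom of $\bar F$ at $0$—is what produces the compact form in the statement rather than a messier expression.
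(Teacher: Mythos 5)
Your proof is correct and rests on the same structural idea as the paper's: conditioning on the number $j$ of working servers among the $d_i$ sampled ones reduces $\mathrm{LL}(d_i,k_i)$ with failures to a $B_j(d_i)$-mixture of $\mathrm{LL}(j,\min\{j,k_i\})$ policies, and the drift of $\bar F$ is linear in that mixture. Where you differ is in how the per-policy term is obtained. The paper outsources both the linearity of the cavity-queue equation in the policy mixture (Theorem 5.2 of \cite{Hellemans19}) and the evaluation of $\mathbb{P}\{\mathcal{Q}_{LL(j,k)}(U)>w,\ U\le w\}$ (Proposition 6.8 of \cite{Hellemans19}); you instead re-derive the bracket from first principles by an up-/down-crossing balance, using the identity $\E[\min\{L,k\}]=H_{j,k}(u)+1$ for $L\sim\mathrm{Bin}(j,1-\bar F(u))$ to count retained fragments at or below level $u$, treating the atom of $F$ at $0$ separately, and integrating by parts. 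I checked this computation: the two $H_{j,k}(0)$ boundary contributions do cancel as you claim (using $\bar G_i(0)=1$ and $\bar F(0)=\rho$), and the result agrees with the cited Proposition 6.8 --- e.g.\ for $k=1$ your bracket reduces to $\bar G_i(w)-\bar F(w)^j+\int_0^w\bar F(u)^j g_i(w-u)\,du$, which is exactly what a direct calculation with $U_{(1)}=\min(U_1,\dots,U_j)$ gives for $\mathrm{LL}(j,1)$. What your route buys is a self-contained argument that explains where the unit offset in the definition of $H_{j,k}$ and the standalone $\bar G_i(w)$ term come from; what it costs is that the rate-conservation identity for the finite-$N$ system and its passage to the cavity limit are asserted rather than proved, which is precisely the content of the citation to Theorem 5.2. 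The obstacle you flag --- that availability must act as a workload-independent thinning so that the mixture weights are the unbiased $B_j(d_i)$ and the $j$ working servers still carry i.i.d.\ $\bar F$ workloads --- is indeed the modeling assumption the paper makes implicitly and does not discuss, so your identification of it as the delicate point is apt.
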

\begin{proof}
Assume we have some policy $P$ which applies policy $P_1$ with probability $p_1$, and $P_2$ with probability $p_2$. Moreover, denote by $d_i$ the number of servers selected for policy $P_i$.
Let $U$ denote the workload at the cavity queue right before a potential arrival, $\mathcal{Q}(U)$ the workload at the cavity queue right after the potential arrival and $\mathcal{Q}_j(U)$ the workload right after a potential arrival under policy $P_j$. Note that the potential arrival is an actual arrival if and only if $\mathcal{Q}(U) > U$. From Theorem 5.2 in \cite{Hellemans19}, one can see that $\bar F'(w)$ satisfies
\begin{align}
\bar F'(w) &= -\lambda d_1 p_1 \mathbb{P}\{ \mathcal{Q}_1(U) > w , U \leq w \}\nonumber \\
& - \lambda d_2 p_2 \mathbb{P}\{ \mathcal{Q}_2(U) > w, U \leq w \}. \label{eq:FDE_linear}
\end{align}
For our DSS with hot and cold files, this implies that
\begin{align}
\bar F'(w) &= -\lambda d_c p_c \mathbb{P}\{ \mathcal{Q}_c(U) > w , U \leq w \}\nonumber \\
& - \lambda d_h p_h \mathbb{P}\{ \mathcal{Q}_h(U) > w, U \leq w \}, \label{eq:FDE_ch}
\end{align}
with $\mathcal{Q}_c(U)$ and $\mathcal{Q}_h(U)$  denotes the workload at the cavity queue right after a potential cold and hot arrival.
Let us now consider the case when a cold file request finds exactly $j$ functioning servers out of $d_c$. This happens with probability $B_j(d_c)$ and the file is served using the policy\ LL($j, \min\{k_c, j\}$).
For a hot file  that finds $j$ out of $d_h$ servers available, the file is served using LL($j,1$). Treating each such case as a separate policy and  using \eqref{eq:FDE_linear}, we can  now write the right hand side of \eqref{eq:FDE_ch} as:
\begin{align}
&-\lambda d_c p_c \sum_{j=1}^{d_c} B_j(d_c) \mathbb{P}\{ \mathcal{Q}_{LL(j, \min\{k_c, j\}) }(U) > w , U \leq w \} \nonumber\\
& - \lambda d_h p_h \sum_{j=1}^{d_h} {B_j(d_h)} \mathbb{P}\{ \mathcal{Q}_{LL(j,1)}(U) > w, U \leq w \}, \nonumber
\end{align}
The proof now follows from Proposition 6.8 in \cite{Hellemans19}, where 
$
\mathbb{P}\{ \mathcal{Q}_{LL(d, k) }(U) > w , U \leq w \}
$
is computed.
\end{proof}
\begin{remark}
From this result one easily finds the mean workload $\E[W] = \int_0^\infty \bar F(w) \, dw$ for the cavity queue.
\end{remark}
\begin{remark}
There are two special cases for $H_{j,k}(w)$ as defined in \eqref{eq:HdK}. The first is the case where $k=1$, in which case $H_{j,1}(w) = -\bar F(w)^j$ (as is the case for hot jobs). The second special case is $k_i = d_i$ in which case we have $H_{j,j}(w) = j-1 - j \bar{F}(w)$ (as is the case for cold jobs which find $j \leq k_c$ functioning servers).
\end{remark}
Using the fact that $Y_i$ is a PH($\alpha_i, A_i$) variable, we can instead obtain $\bar F(w)$ as the solution of a delayed differential equation (DDE). Note that solving a DDE requires only O($K$) computations while the IDE requires O($K^2$) ($K$ denotes the number of points used to discretize the workload).
\begin{corollary}\label{cor:model1}
When $\hat{X}_i = \delta_i + Y_i$ with $\delta_i \in [0,\infty)$ and $Y_i$ has distribution PH($\alpha_i, A_i$) for $i \in \{h,c\}$, the equilibrium workload distribution satisfies the following DDE: 
\begin{align*}
\footnotesize
\bar F'(w) &= - \sum_{i \in \{h,c\}} \lambda p_i \sum_{j=1}^{d_i} B_j(d_i) \\ & \times \left[ \bar G_i(w) + H_{j,\min\{j, k_i\}}(w) - \alpha_i \xi_{j,\min\{j, k_i\},\delta_i}^{(i)}(w) \right]
\end{align*}
\normalsize
with $\xi^{(i)}_{.,.,\delta_i}(w)= 0$ (for $w \leq \delta_i$, $\bar F(0) = \rho$ (as in \eqref{eq:load_model1}) and:
\begin{align*}
\xi_{j,k,\delta_i}^{\prime(i)} (w) &= A_i \xi_{j,k,\delta_i}^{(i)} (w) + H_{j,k}(w-\delta_i)^{d_i} \mu_i, \quad w > \delta_i.
\end{align*}
\end{corollary}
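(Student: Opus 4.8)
The plan is to remove the only nonlocal term in the IDE of Proposition~\ref{prop:model1}, namely the convolution $\int_0^w H_{j,\min\{j,k_i\}}(u)\,g_i(w-u)\,du$, by exploiting the fact that the phase-type family is closed under the matrix-exponential semigroup generated by $A_i$. Every other term in the bracket ($\bar G_i(w)$ and $H_{j,\min\{j,k_i\}}(w)$) is already pointwise in $w$ and transcribes unchanged, so the entire content of the corollary is the identity $\int_0^w H_{j,\min\{j,k_i\}}(u)\,g_i(w-u)\,du = \alpha_i\,\xi^{(i)}_{j,\min\{j,k_i\},\delta_i}(w)$ together with the linear system that $\xi$ obeys.

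First I would substitute the explicit density of the shifted phase-type service time. Since $\hat X_i=\delta_i+Y_i$ with $Y_i\sim\mathrm{PH}(\alpha_i,A_i)$, one has $g_i(t)=\alpha_i e^{A_i(t-\delta_i)}\mu_i$ for $t\ge\delta_i$ and $g_i(t)=0$ otherwise, with $\mu_i=-A_i\mathbf{1}$. Because $g_i(w-u)$ vanishes unless $u\le w-\delta_i$, the convolution collapses to $\alpha_i\int_0^{w-\delta_i} e^{A_i(w-\delta_i-u)}\mu_i\,H_{j,\min\{j,k_i\}}(u)\,du$, in which the row vector $\alpha_i$ and the scalar factor $H_{j,\min\{j,k_i\}}(u)$ pull cleanly outside the matrix exponential. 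This motivates defining the column-vector auxiliary function $\xi^{(i)}_{j,k,\delta_i}(w):=\int_0^{w-\delta_i} e^{A_i(w-\delta_i-u)}\mu_i\,H_{j,k}(u)\,du$, so that the convolution equals exactly $\alpha_i\,\xi^{(i)}_{j,\min\{j,k_i\},\delta_i}(w)$ and the bracket assumes the form stated in the corollary.

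Next I would convert this integral representation of $\xi$ into the claimed differential equation. For $w\le\delta_i$ the range of integration is empty, which gives the stated initial condition $\xi^{(i)}_{j,k,\delta_i}(w)=0$. For $w>\delta_i$ I would differentiate under the integral sign: the moving upper limit contributes the boundary term $\mu_i\,H_{j,k}(w-\delta_i)$ (the integrand evaluated at $u=w-\delta_i$, where the exponent vanishes and $e^{A_i\cdot 0}$ is the identity), while $\tfrac{\partial}{\partial w}e^{A_i(w-\delta_i-u)}=A_i e^{A_i(w-\delta_i-u)}$ reproduces $A_i\,\xi^{(i)}_{j,k,\delta_i}(w)$. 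Hence $\xi$ solves the linear delay differential equation $\xi_{j,k,\delta_i}^{\prime(i)}(w)=A_i\,\xi^{(i)}_{j,k,\delta_i}(w)+\mu_i\,H_{j,k}(w-\delta_i)$ for $w>\delta_i$, and substituting $\alpha_i\xi$ back into Proposition~\ref{prop:model1} yields the DDE. The $O(K)$ versus $O(K^2)$ complexity claim then follows because $\xi$ now advances through a local $n$-dimensional recursion rather than a fresh quadrature at each of the $K$ grid points.

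The derivation is otherwise routine; the one point requiring care is the bookkeeping of the shift $\delta_i$. One must track that the kernel $g_i$ is supported on $[\delta_i,\infty)$, so that the effective upper limit is $w-\delta_i$, since it is precisely this shift that produces both the flat initial segment $\xi\equiv 0$ on $[0,\delta_i]$ and the delayed forcing $H_{j,k}(w-\delta_i)$; the matrix-exponential identity and the Leibniz rule are then mechanical.
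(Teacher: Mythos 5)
Your proof is correct and is essentially the paper's own argument: the authors' entire proof consists of defining $\xi^{(i)}_{j,k,\delta_i}(w)=\int_0^{w-\delta_i}H_{j,k}(u)\,e^{A_i(w-u-\delta_i)}\mu_i\,du$ (zero for $w\le\delta_i$) and leaving the substitution of the shifted phase-type density and the Leibniz-rule differentiation implicit, which you have simply written out in full. One remark: your derivation yields the forcing term $H_{j,k}(w-\delta_i)\,\mu_i$ \emph{without} the exponent $d_i$ that appears in the corollary's displayed DDE; since your version is the one consistent with the convolution term in Proposition~\ref{prop:model1} and with the paper's own definition of $\xi$, the exponent in the statement appears to be a typographical slip rather than a gap in your argument.
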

\begin{proof}
The proof follows by defining:
$$
\xi_{j,k,\delta_i}^{(i)} (w) = \int_0^{w-\delta_i} H_{j,k}(u) e^{A_i(w-u-\delta_i)} \mu_i \, du, w > \delta_i,
$$
and $\xi^{(i)}_{.,.,\delta_i}(w)= 0$ for $w \leq \delta_i$.
\end{proof}


\begin{figure*}[h!]
\begin{center}
\begin {tikzpicture}
%
    %
    \node[inner sep=0pt] (whitehead) at (0,0)  
    {\includegraphics[width=6.5cm]{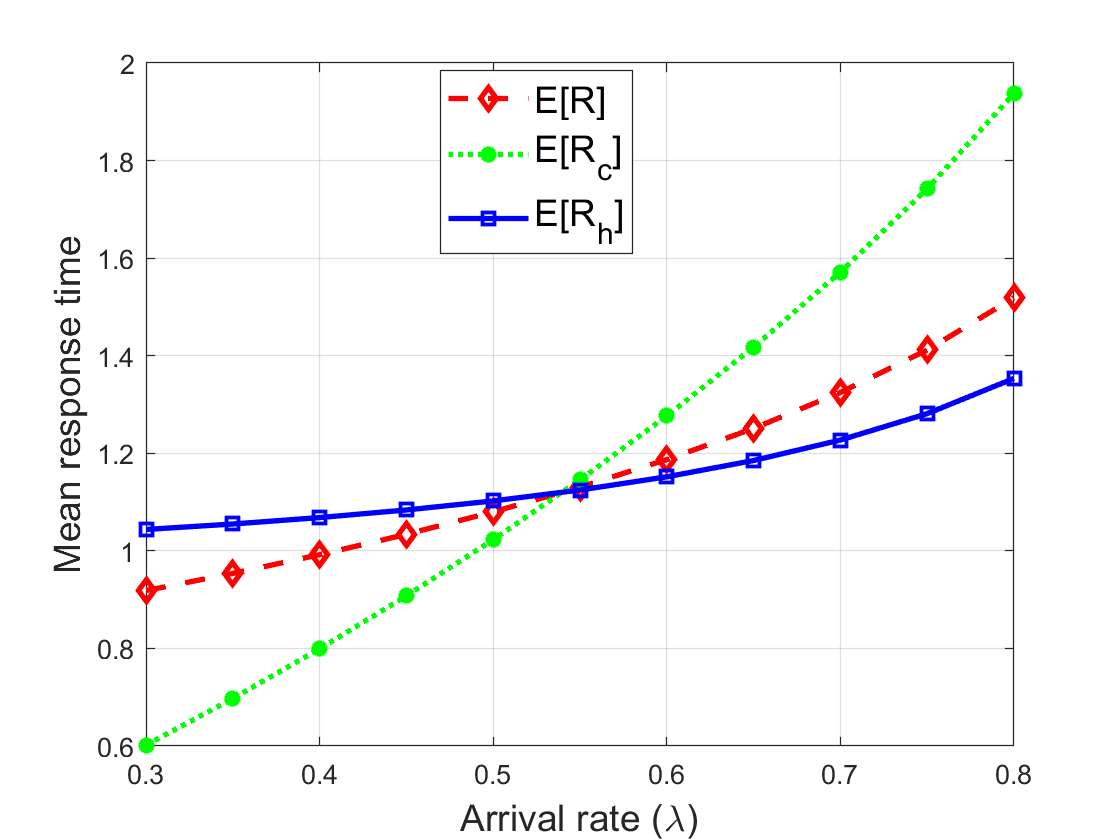}};
     \node [above] at (0,-3.2) {\scriptsize (a)};     
    \node[inner sep=0pt] (whitehead) at (6,0)  
    {\includegraphics[width=6.5cm]{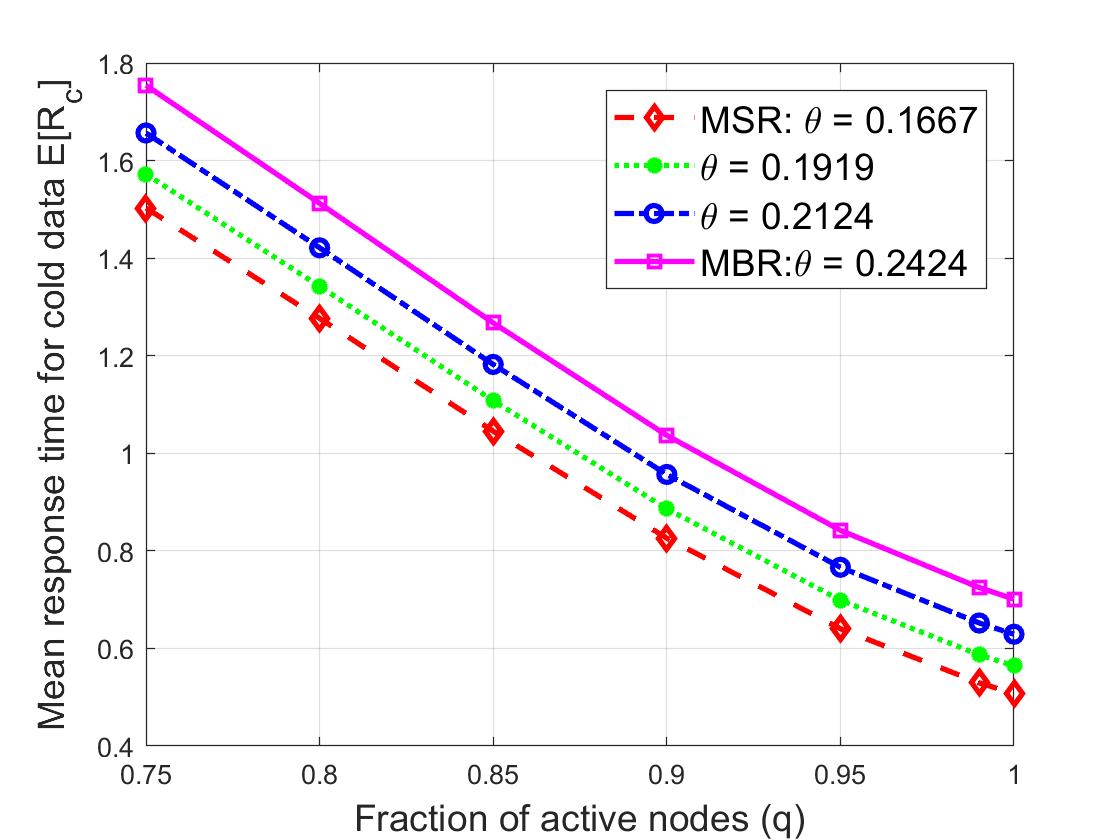}};
    \node [above] at (6,-3.2) {\scriptsize (b) };     
    %
    %
    \node[inner sep=0pt] (whitehead) at (12,0)  
    {\includegraphics[width=6.5cm]{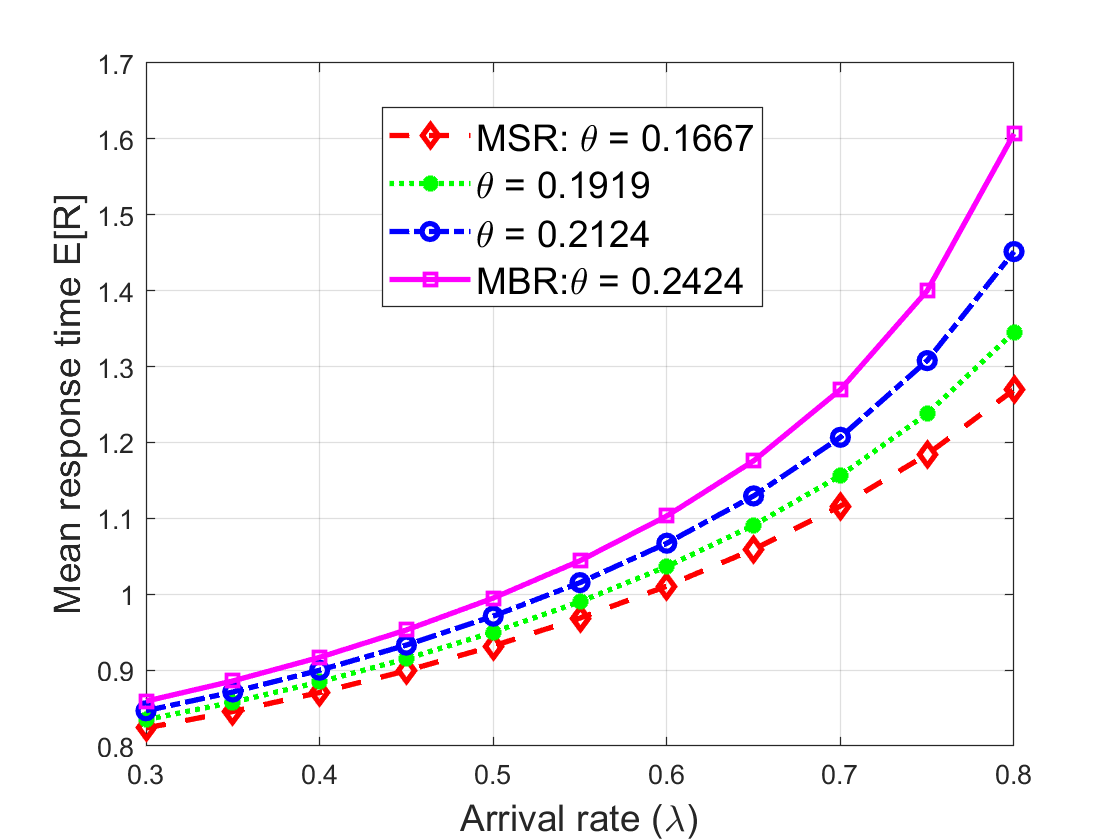}};
    \node [above] at (12,-3.2) {\scriptsize (c)};     
    %
    %
\end {tikzpicture}
\caption{Plot of mean response time for various parameters}
\vspace{-0.2in}
\label{Figure_ER_curves}
\end{center}
\end{figure*}

\subsubsection*{Download time distribution}
We now characterize the download time or response time distribution for hot and cold requests. Note that a type $i$ file request can be successfully downloaded only if at least $k_i$ of the $d_i$ nodes storing the file have not failed.
Now consider such a type $i$ file request which has exactly $j$
of its $d_i$ nodes available where $j \geq k_i$.
Let $\{U_n, n= 1, \ldots, j\}$ denote the workload variable (pending work) at each of the $j$ available servers. Since the policy employed is $LL(d_i, k_i)$ and the fragment service time is identical, the file is downloaded once the file request at the $k_i^{th}$ least loaded queue reaches the head of the queue and the corresponding fragment is served. Therefore the download time variable $R_i$ for this particular type $i$ file (that has exactly $j$ nodes available) is equal to $U_{(j, k_i)} + \hat{X}_i$ where $U_{(j, k_i)}$ denotes the $k_i^{th}$ order statistic of $\{U_n, n= 1, \ldots, j\}$ where $k_i \leq j \leq d_i$. Due to the asymptotic independence of the queues, $\{U_n, n= 1, \ldots, j\}$ are i.i.d random variables with CDF $F(\cdot)$ and hence  $U_{(j, k_i)}$ now denotes the $k_i^{th}$ order statistic of $j$ i.i.d random variables. 
Let $R_{(j,k_i)}$ denote the download time variable for a type~$i$ file request for which exactly $j$ out of $d_i$ nodes are available and $j \geq k_i$. We find that $R_{(j, k_i)} = U_{(j, k_i)} + \hat{X}_i$ and hence its CCDF is given by:
\begin{align}
\label{eq:frjk}
\bar F_{R_{(j,k_i)}}(w)
&= \bar G_i(w) + \int_0^w \bar F_{U_{(j,k_i)}}(u) \cdot g_i(w-u) \, du.
\end{align}
Now if the $k_i^{th}$ order statistic satisfies  $U_{(j,k_i)} > u,$ then this implies that at least $j - k_i + 1$ nodes have workload greater than $u$. Therefore:
\begin{equation*}
 \bar{F}_{U_{(j,k_i)}}(u) = \sum_{i=0}^{k_i} \binom{j}{i} \bar{F}^{j-i}(u){F}^{i}(u).
 \end{equation*}
We use the preceding discussion to characterize the conditional download time or response time distribution $F_{R_i}$ for an arriving type~$i$ job conditioned on the fact that it finds at least $k_i$ out of $d_i$ nodes available (otherwise the file cannot be reconstructed). Let  $\tilde p^{(i)}_j, j \geq k_i$ denote the probability that an arriving job is of type $i$ and it finds exactly $j$ out of the $d_i$ nodes storing this file to be available.  Clearly we have
$
\tilde p^{(i)}_j = p_i  B_j(d_i)
$
for $i \in \{h,c\}$.
To account for only those file requests that can be reconstructed, we normalize $\tilde p^{(i)}_j$ by dividing it with $p_i \sum_{j=k_i}^{d_i}  B_j(d_i)$ for each $j$. The resulting quantity denoted by $p^{(i)}_j$ is the probability that an arriving request for a type $i$ file finds $j$ out of its $d_i$ servers available given that at least $k_i$ out of $d_i$ files storing it are available.
From this we are able to compute the response time distribution for a type~$i$ job as:
\begin{equation}
\label{eq:resp_time}
 \bar F_{R_i}(w) = \sum_{j \geq k_i} p^{(i)}_j \bar F_{R_{(j,k_i)}}(w),
\end{equation}
where $\bar F_{R_{(j,k_i)}}(w)$ is given by \eqref{eq:frjk}.
Integrating these functions yields the mean response time for cold and hot jobs given by $\E[R_i]=\int_0^\infty \bar F_{R_i}(w)\, dw$. In order to compute the response time distribution of an arbitrary job, we need to compute the probability that an arbitrary job which is finished is a hot/cold job. To this end we denote $\tilde \beta_i = \sum_{j \geq k_i} \tilde p_j^{(i)}$ and then re-normalize these to have $\beta_i,$ such that $\beta_h + \beta_c = 1$. We find that the CCDF of the response time is given by:
$
\bar F_R(w) = \beta_h \bar F_{R_h}(w) + \beta_c \bar F_{R_c}(w),
$
integrating again yields the mean response time.


%
\section{Numerical results}
\label{Section_simulations}

In this section, we offer numerical insights into the impact of node failures on $E[R_i]$ for different types of erasure codes. In all our results, the numerical evaluation for $E[R_i]$ first involves evaluating $\bar{F}(\cdot)$ using Corollary~\ref{cor:model1}. This is then used for the numerical evaluation of $\bar{F}_{R_i}(\cdot)$ in \eqref{eq:resp_time} which is then integrated to obtain $E[R_i]$. Needless to say, our results are for the limiting regime when $N \rightarrow \infty$. Having said that, the results are very accurate for finite but large values of $N$. We do not exhibit such results here for space constraint but refer readers to \cite{Hellemans18,Hellemans19} for similar observations.
\subsubsection*{Example~1}
We first consider a DSS with the following set of parameters. The arrival rate is $\lambda = 0.7$. In practice, around 70 percent of the download traffic comprises of hot files and hence we choose $p_h = 0.7$. Hot files are stored using 3-replication $(d_h = 3)$ and have a mean \textit{file service time} $E[X_h] = 1$. Cold files are stored using a (4,2)-MDS code ($d_c = 4$ and $k_c = 2$) with 
mean \textit{file service time} also satisfying $ E[X_c] = 1$. 
We assume that the corresponding file service time distribution is hyper-exponential with two phases and with parameters adjusted to have squared coefficient of variation 
$SCV = 2$ and shape parameter $f = 0.5$ (balanced means for the phases). See \cite{Hellemans18} for details on relating SCV and $f$ of a hyper-exponential with the underlying phase-type parameters. Finally we assume that the constant startup time for serving a fragment is $\delta_h = \delta_c = 0.2 $. For a DSS with these parameters, \cref{table:ex1} compares $E[R_i]$ for type $i$ files for various values of $q$ (fraction of available nodes). As the table suggests, as $q$ decreases, $E[R_i]$ increases and so does the loss probabilities. To illustrate the impact of node failure, the table outlines the percentage increase in $E[R_i]$ compared with the $q=1$ case.   


\begin{table}[htbp]

\begin{center}

\begin{tabular}{|c|c|c|c|c|c|c|c|}
\hline
  \multirow{3}{*}{$q$} 
& \multicolumn{2}{c|}{Percentage increase} 
& \multicolumn{2}{c|}{Loss probability} \\
& \multicolumn{2}{c|}{in response time} 
& \multicolumn{2}{c|}{~} \\
\cline{2-5}
& Hot file & Cold file  
& Hot file & Cold file \\
\hline
%
%
$1$    & $0\%$     & $0\%$     & $0$      &  $0$ \\ \hline
$0.95$ & $2.3\%$   & $8.08\%$  & $0.0001$ &  $0.005$ \\ \hline
$0.9$  & $5.16\%$  & $18.01\%$ & $0.001$  &  $0.0037$ \\ \hline
$0.85$ & $8.51\%$  & $29.2\%$  & $0.0034$ &  $0.0120$ \\ \hline
$0.8$  & $12.24\%$ & $41.13\%$ & $0.008$  &  $0.0272$ \\ \hline
$0.75$ & $16.21\%$ & $53.31\%$ & $0.0156$ &  $0.0508$ \\ \hline
$0.7$  & $20.26\%$ & $65.23\%$ & $0.027$  &  $0.0837$ \\ \hline
$0.65$ & $24.19\%$ & $76.32\%$ & $0.0429$ &  $0.1265$ \\ \hline
$0.6$  & $27.78\%$ & $85.98\%$ & $0.064$  &  $0.1792$ \\ \hline
\end{tabular}

\end{center}

\caption{Percentage increase in $E[R_i]$ versus $q$}
\label{table:ex1}
%
\end{table}
\subsubsection*{Example 2}
We consider a (9,6)-MDS code
with $d_c = 9, k_c = 6$. The file service time distribution is again a hyper-exponential distribution with $E[X_c] = 1,
SCV = 2$ and $f = 0.5$. For hot files, we have $p_h = 0.7, d_h = 3, E[X_h] = 1,$ and the constant startup time 
$\delta_h = 0.1$ and $\delta_c = 0.1$. Assuming that the fraction of available nodes is $0.8,$ we plot $E[R], E[R_h]$ and $E[R_c]$ versus $\lambda$ for our DSS in \cref{Figure_ER_curves}-(a). While the response time increases with $\lambda,$ it is interesting to note that the gains due to download parallelism involved in cold files decreases with $\lambda$.

\subsubsection*{Example 3}
In \cref{Figure_ER_curves}-(b) and \cref{Figure_ER_curves}-(c) we plot $E[R_c]$ versus $q$ and $\lambda$ respectively for (9,6)-RG codes with different fragment sizes $\theta$. The corresponding file service time is set as $E[X_c] = k_c \theta$. The remaining parameters are chosen as in Example 2. For this example, we assume a normalized file size of $B=1$. Since $d_c = 9$ and $k_c = 6,$ the MSR code variant has $\theta = \frac{1}{6}$.
These codes have the lowest value of $\theta$ and hence have the lowest $E[R_c]$ across the two plots. The other extreme are the MBR codes with $\theta = 0.2424$. Since their mean file service time is highest of the 4 codes we consider, they have the highest value of $E[R_c]$. In both the plots, $E[R_c]$ decreases with $q$ and increases with $\lambda$.

%
\section{Conclusion}
\label{Section_Conclusion}

In this work, we have used the cavity process method to characterize the mean download time for a DSS with hot and cold files and node failures. We obtain the DDE for the workload CCDF which can be easily evaluated numerically.
This is then used to obtain the mean download time for hot and cold jobs. A clear advantage of the mean field approach is that we are able to model node failures and also account for multiple hot and cold files, something which has not been possible to analyze till now. 
Our method also allows us to see the impact of erasure codes and their parameters on the mean download time. As future work, we would like to account for recovery traffic and their impact on the download time of hot and cold files. Further, it would be interesting to have an accurate model to account for node failures where the nodes resume service as soon as they have been recovered. 


\bibliographystyle{IEEEtran}
\bibliography{References_Paper1_Tim}

\end{document}